\newtheorem{thm}{Theorem}[section]
\newtheorem{lem}[thm]{Lemma}
\renewcommand{\leq}{\leqslant}
\renewcommand{\geq}{\geqslant}
\DeclareMathOperator{\E}{\mathbb{E}}
\DeclareMathOperator{\calI}{\mathcal{I}}
\newcommand{\caldb}{\mathcal{D}^{\mathsf{B}}}
\newcommand{\caldg}{\mathcal{D}^{\mathsf{G}}}
\newcommand{\Bino}{\mathsf{Binomial}}
\newcommand{\RR}{\mathbb{R}}      
\begin{document}

\title{{\bf Understanding the Correlation Gap for Matchings}}

\author{
Guru Guruganesh\thanks{{\tt euiwoonl@cs.cmu.edu} }}

\author{
Guru Guruganesh\thanks{Supported in part by Anupam Gupta\rq{}s NSF awards CCF-1319811 and CCF-1536002. {\tt ggurugan@cs.cmu.edu}} \and 
Euiwoong Lee\thanks{Supported by a Samsung Fellowship and Venkat Guruswami\rq{}s NSF CCF-1526092. {\tt euiwoonl@cs.cmu.edu} }}

\date{Computer Science Department \\ Carnegie Mellon University \\ Pittsburgh, PA 15213.}

\maketitle

\begin{abstract}
    Given a set of vertices $V$ with $|V| = n$, a weight vector $w \in (\RR^+ \cup \{ 0 \})^{\binom{V}{2}}$, and a probability vector $x \in [0, 1]^{\binom{V}{2}}$ in the matching polytope, we study the quantity 
\[
\frac{\E_{G}[ \nu_w(G)]}{\sum_{(u, v) \in \binom{V}{2}} w_{u, v} x_{u, v}}
\]
where $G$ is a random graph where each edge $e$ with weight $w_e$ appears with probability $x_e$ independently, and let $\nu_w(G)$ denotes the weight of the maximum matching of $G$. 
This quantity is closely related to correlation gap and contention resolution schemes, which are important tools in the design of approximation  algorithms, algorithmic game theory, and stochastic optimization. 

    We provide lower bounds for the above quantity for general and bipartite graphs, and for weighted and unweighted settings. 
    The best known upper bound is $0.54$ by Karp and Sipser, and the best lower bound is $0.4$. We show that it is at least $0.47$ for unweighted bipartite graphs, at least $0.45$ for weighted bipartite graphs, and at least $0.43$ for weighted general graphs. 
To achieve our results, we construct local distribution schemes on the dual  which may be of independent interest. 
\end{abstract}

\section {Introduction}
We study the size (weight) of the maximum matching of a random graph sampled from various random graph models. 
Let $V$ be the set of vertices with $|V| = n$.
Given the probability vector $x \in [0, 1]^{\binom{V}{2}}$ and the weight vector
$w \in (\mathbb{R}^{+} \cup \{ 0 \})^{\binom{V}{2}}$, let $\caldg_{n, w, x}$ be the distribution of random graphs with $n$ vertices
such that each pair $e \in \binom{V}{2}$ becomes an edge with probability $x_e$ independently. If it becomes an edge, its weight is $w_e$. 
For bipartite graphs, let $V_1$ and $V_2$ be the set of left and right vertices with $|V_1| = |V_2| = n$.
Given the probability vector $x \in [0, 1]^{V_1 \times V_2}$ and the weight vector
$w \in (\mathbb{R}^{+} \cup \{ 0 \})^{V_1 \times V_2}$, let $\caldb_{n, w, x}$ be the distribution of random bipartite graphs with $2n$ vertices
such that each pair $e \in V_1 \times V_2$ becomes an edge with probability $x_e$ independently. If it becomes an edge, its weight is $w_e$. 
We use $\caldb_{n, x}$ (resp. $\caldg_{n, x}$) for the unweighted case ($w = (1, 1, \dots, 1)$). 

We focus on the case when the probability vector $x$ is in the matching polytope of the complete (bipartite) graph. 
Recall that for bipartite graphs, $x \in [0, 1]^{V_1 \times V_2}$ is in the matching polytope if each $v \in V_1 \cup V_2$ satisfies $\sum_u x_{u, v} \leq 1$.
For general graphs, $x \in [0, 1]^{\binom{V}{2}}$ is in the matching polytope if each $v \in V$ satisfies $\sum_u x_{u, v} \leq 1$ 
and each odd set $S \subseteq V$ satisfies $\sum_{\{ u, v \} \subseteq S} x_{u, v} \leq \lfloor (|S| - 1) / 2 \rfloor$.\footnote{Our result for general graphs, Theorem~\ref{thm:general_weighted} holds even when $x$ satisfies the first type of constraints.} 

Given a weighted graph $G$, let $\nu_w(G)$ be the weight of the maximum weight matching of $G$. 
If $G$ is unweighted, $\nu(G)$ denotes the cardinality of the maximum matching of $G$. 
For any $x \in [0, 1]^{\binom{V}{2}}$ and $w \in (\mathbb{R}^{+} \cup \{ 0 \})^{\binom{V}{2}}$, 
we have $\E_{G \sim \caldg_{n, w, x}}[\nu_w(G)] \leq \sum_{(u, v) \in \binom{V}{2}} w_{u, v} x_{u, v}$,
simply because the probability that $(u, v)$ is included in the maximum matching is at most $x_{u, v}$. 
The analogous statement also holds for bipartite graphs. 

If $x$ is in the matching polytope\footnote{If $x$ is not in the matching polytope, 
    one can construct examples where  $\kappa = \Omega(n)$.}, we can prove that 
$\E_{G} [\nu_w(G)] \geq \kappa \cdot \sum w_{u, v} x_{u, v}$ 
for some constant $0 < \kappa < 1$ .
For the general graph model, $\kappa$ is known to be at least $(1- 1/e)^2 \sim 0.40$ for every $w$~\cite{CGM13}. 
For the bipartite graph model, $\kappa$ is known to be at least $0.4$ for every $w$~\cite{CVZ14}.
Karp and Sipser~\cite{KS81} showed an upper bound of $0.54$ for both bipartite and general graphs, 
by demonstrating it for the unweighted models where every edge appears with equal probability.
Our main results are the following improved lower bounds on $\kappa$. 
Our first theorem concerns the unweighted bipartite model. 
\begin{thm}
Let $|V_1| = |V_2| = n$ and $x \in [0, 1]^{V_1 \times V_2}$ be in the matching polytope of the complete bipartite graph on $V_1 \cup V_2$. Then 
\begin{equation}
\frac{\E_{G \sim \caldb_{n, x}} [ \nu(G) ] }{ \sum_{(u, v) \in V_1 \times V_2} x_{u, v}} \geq 0.476. 
\end{equation}
\label{thm:bipartite_unweighted}
\end{thm}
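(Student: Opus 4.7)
The plan is to prove Theorem~\ref{thm:bipartite_unweighted} via a contention resolution scheme: we specify a randomized rule that, given $G \sim \caldb_{n,x}$, extracts a matching $M(G) \subseteq G$ satisfying
\[
\Pr_{G,\,\text{scheme}}[\,e \in M(G)\,] \geq 0.476 \cdot x_e
\]
for every potential edge $e = (u,v) \in V_1 \times V_2$. Summing over $e$ and using $\nu(G) \geq |M(G)|$ then gives the theorem. Because the bipartite matching polytope is integral, it is enough to exhibit such a randomized matching rule.

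The scheme is local. Attach to each vertex $v \in V_1 \cup V_2$ an independent uniform label $\tau_v \in [0,1]$, independent of $G$. For each realized edge $e = (u,v) \in G$, declare $e$ a \emph{candidate} when a predicate $P(\tau_u, \tau_v;\, x_u, x_v, x_{uv})$ holds, where $x_v := \sum_w x_{vw}$ is the fractional degree of $v$. The output $M$ consists of those candidates that have no incident competing candidate. The labels $\tau_v$ play the role of randomized dual (vertex-cover) variables and $P$ locally rounds the fractional vertex cover induced by $x$; this realizes the ``local distribution scheme on the dual'' advertised in the abstract.

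Fix an edge $e = (u,v)$. Conditioned on $e \in G$, the realizations of the other edges incident to $u$ or $v$ are still independent Bernoullis, so
\[
\Pr[\,e \in M \mid e \in G\,] = \E_{\tau_u,\tau_v}\!\Bigl[\mathbf{1}\{P(\tau_u,\tau_v)\}\cdot B_u(\tau_u) \cdot B_v(\tau_v)\Bigr],
\]
where $B_u(\tau_u) := \prod_{w \ne v}\bigl(1 - x_{uw}\, q(\tau_u,w)\bigr)$ lower bounds the probability that no edge $(u,w)$ beats $e$ in the candidacy competition at $u$, and $B_v$ is analogous. Applying $1-t \geq e^{-t}$ together with the matching-polytope constraints $\sum_{w \ne v} x_{uw} \leq 1 - x_{uv}$ and $\sum_{w \ne u} x_{vw} \leq 1 - x_{uv}$ converts the bound into an explicit integral depending only on $(x_u, x_v, x_{uv})$ and the shape of $P$.

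The main obstacle is that vertex-independent thresholds cannot simultaneously handle heavy vertices, where blocking swamps candidacy, and light vertices, where thresholds underuse the matching budget; that uniform regime reproduces only the previous $0.4$ bound. To cross $0.4$ and reach $0.476$, the predicate $P$ must depend piecewise on the marginal degrees $x_u, x_v$, smoothly interpolating between the two extremes. Once the right piecewise structure is in place, verifying $\Pr[e \in M \mid e \in G] \geq 0.476$ reduces to a low-dimensional optimization over the threshold parameters in $P$; identifying and certifying that piecewise structure is the technical heart of the proof.
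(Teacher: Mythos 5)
Your proposal takes a genuinely different route from the paper, and unfortunately it stops short of a proof at precisely the step you flag as ``the technical heart.''

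The paper does not construct a contention resolution scheme. Instead it works on the dual: for each realized graph $G$ it takes an optimal fractional vertex cover $y(G)$ with $\|y(G)\|_1 = \nu(G)$ (K\"onig--Egerv\'ary), then defines a fractional mass-distribution $F_G$ that (i) sends each vertex's mass $y_v(G)$ uniformly to its incident realized edges and (ii) additionally moves a deterministic amount $c\,x_a^2 x_b$ between every pair of adjacent potential edges $a,b$, with $c=1/6$. The second, graph-independent transfer is the key new ingredient: it is exactly what penalizes high-$x_e$ edges and subsidizes low-$x_e$ edges, which is needed to beat $0.4$. Since every vertex retains nonnegative mass and total mass is conserved, $\sum_e t_e(G) \le \nu(G)$, so the theorem reduces to showing $\E[t_e(G)] \ge 0.476\, x_e$ edgewise. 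That is established by a convexity-type argument (Theorem~\ref{thm:mass2}) reducing to the symmetric binomial case, and then a Poisson-limit computation (Lemma~\ref{lem:finish}). Nothing here ever selects a matching; the argument is purely an accounting of dual mass.

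Your CR-scheme outline is a legitimate alternative framing in principle --- a $0.476$-CR scheme would certainly imply the bound --- but it has a real gap, not merely an omitted calculation. The entire burden rests on exhibiting and analyzing a predicate $P(\tau_u,\tau_v;x_u,x_v,x_{uv})$ for which the ``keep a candidate iff it has no competing candidate'' rule yields $\Pr[e\in M\mid e\in G]\ge 0.476$. You do not write down $P$, and you correctly observe that the natural degree-independent threshold only reproduces $0.4$. But there is no reason offered (and it is far from clear) that any local predicate of your proposed form, together with the very conservative all-or-nothing conflict rule, can reach $0.476$; indeed the fact that Chekuri et al.\ get an existential $\kappa$-CR scheme only via a nonconstructive minimax argument suggests the constructive piece is nontrivial. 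As written, the proposal is a research plan that terminates exactly where the difficulty begins. To make it a proof you would need, at minimum, an explicit $P$, a correct (not merely ``lower-bounding'') treatment of the blocking events $B_u,B_v$ including their dependence through shared randomness, and a verification of the resulting low-dimensional optimization over the worst-case $(x_u,x_v,x_{uv})$.
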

We also obtain a slightly weaker result on the weighted bipartite model. 
\begin{thm}
Let $|V_1| = |V_2| = n$ and $x \in [0, 1]^{V_1 \times V_2}$ be in the matching polytope of the complete bipartite graph on $V_1 \cup V_2$. Then 
for any $w \in (\RR^+ \cup \{ 0 \})^{V_1 \times V_2}$, 
\[
\frac{\E_{G  \sim \caldb_{n, w, x}} [\nu_w(G)]}{\sum_{(u, v) \in V_1 \times V_2} w_{u, v} x_{u, v}} \geq \bigg( 1-\frac{3}{2e}   \bigg) \geq 0.4481. 
\]
\label{thm:bipartite_weighted}
\end{thm}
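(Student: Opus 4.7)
The plan is to lower-bound $\E_G[\nu_w(G)]$ by exhibiting, for each realization $G$, a random matching $M = M(G) \subseteq G$ such that every edge $e = (u, v) \in V_1 \times V_2$ satisfies
\[
\Pr[e \in M(G) \mid e \in G] \;\geq\; 1 - \tfrac{3}{2e}.
\]
Linearity of expectation then yields $\E[w(M(G))] = \sum_e w_e x_e \Pr[e \in M(G) \mid e \in G] \geq (1 - 3/(2e)) \sum_e w_e x_e$, which proves the theorem since $\E[\nu_w(G)] \geq \E[w(M(G))]$. Because the rule can be chosen independently of $w$, the task amounts to building a weight-oblivious $(1,\, 1 - 3/(2e))$-balanced contention resolution scheme for the bipartite matching polytope.

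My candidate scheme is ranking-based. Attach an independent random variable $R_v$ (from a distribution to be chosen) to each vertex $v \in V_1 \cup V_2$, form an edge rank $R_e = \varphi(R_u, R_v)$ for each $e = (u, v) \in G$, and let $M(G)$ be the greedy matching processed in increasing order of $R_e$. Fix an edge $e$; a conservative sufficient condition for $e \in M(G)$ is that no other $G$-edge incident to $u$ or $v$ has strictly smaller rank than $e$. Conditioning on $R_u, R_v$ and on $e \in G$, the ``no conflict at $u$'' event depends only on $\{R_w, \mathbf{1}[(u, w) \in G] : w \neq v\}$, while the ``no conflict at $v$'' event depends only on $\{R_{u'}, \mathbf{1}[(u', v) \in G] : u' \neq u\}$; by bipartiteness these two families are disjoint and independent, so the joint probability factors, and each factor is a product over incident pairs to which the matching-polytope constraint $\sum_{w \neq v} x_{uw} \leq 1 - x_{uv}$ can be applied.

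The main technical step, and the principal obstacle, is evaluating a worst-case lower bound on the resulting double integral over $(R_u, R_v)$. The straightforward Bernoulli bound $\prod_w (1 - t x_{uw}) \geq 1 - t \sum_w x_{uw}$ is tight when a single $x_{uw}$ saturates the polytope constraint, and yields only a $1/3$ bound after integration; the heuristic Poisson approximation $\prod_w(1 - t x_{uw}) \approx e^{-t \sum_w x_{uw}}$ yields only $(1 - 1/e)^2 \approx 0.4$, which matches the earlier bound of~\cite{CVZ14} but falls short of what we want. To reach $1 - 3/(2e) \approx 0.4481$ uniformly over $x$, I would either precede the ranking with an edge-activation round at some probability $p \in (0, 1)$ (optimizing $p$ jointly with the rank distribution so that a rigorous Poisson-type lower bound on $\prod(1 - a_i)$ becomes available at the cost of a multiplicative factor), or follow the approach hinted at in the abstract about ``local distribution schemes on the dual'': couple $M(G)$ with a randomly constructed fractional vertex cover of $G$, using K\"onig's theorem to translate dual mass into primal matching weight so that a well-chosen distribution on the dual absorbs precisely the slack that the ranking analysis alone cannot.
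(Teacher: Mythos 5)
Your plan is a sketch, not a proof, and you say so yourself: ``the main technical step, and the principal obstacle, is evaluating a worst-case lower bound on the resulting double integral over $(R_u, R_v)$,'' and by your own accounting the ranking scheme you actually analyze tops out (heuristically) at $(1-1/e)^2 \approx 0.40$, which is strictly below the claimed $1 - 3/(2e) \approx 0.448$. That is a real gap, not a detail: the bound you need is not reachable by ``rigorizing'' the Poisson heuristic for a vertex-ranking CR scheme, because that quantity genuinely sits at $(1-1/e)^2$; getting past $0.40$ requires a different mechanism, and your final sentence merely gestures at one (``local distribution schemes on the dual'') without carrying it out.

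For comparison, the paper's route avoids your obstacle entirely. It does not construct an explicit CR scheme at all; instead it invokes K\"onig--Egerv\'ary duality to write $\nu_w(G) = \sum_v y_v(G)$ for an optimal fractional vertex cover $y(G)$, has each vertex $v$ distribute its dual mass $y_v(G)$ equally among its incident edges of $G$, and lower-bounds the mass landing on edge $e = (u,v)$ by
\[
\E_G\!\left[\frac{y_u(G)}{\deg_G(u)} + \frac{y_v(G)}{\deg_G(v)}\right]
\;\geq\; x_e\, w_e \cdot \E_G\!\left[\frac{1}{\max(\deg_G(u), \deg_G(v))}\,\Big|\, e \in G\right],
\]
using $y_u + y_v \geq w_e$ on the edge $e$ itself. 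The remaining work is a pure probability estimate on $\E[1/(1+\max(Y,Z))]$ with $Y, Z$ sums of independent Bernoullis each summing to at most $1$. The paper's Theorem~\ref{thm:oneside} (a careful exchange argument on the Bernoulli parameters) shows the worst case is $Y \equiv 1$ and $Z \sim \Bino(m, 1/m)$, and the Poisson limit of $\E[1/(1+\max(1,Z))]$ is exactly $1 - \tfrac{3}{2e}$. Two things to notice: (i) the dual/cover approach gives you $\E[1/(1+\max(Y,Z))]$, which is a strictly larger quantity than what a factored ranking analysis produces (with both neighborhoods Poisson$(1)$ it is about $0.476$, and even at the true worst case it is $0.448$, both above $(1-1/e)^2$); (ii) the worst case is \emph{asymmetric} --- one side concentrated on a single full edge, the other spread out --- and identifying this is the content of Theorem~\ref{thm:oneside}, which has no analogue in your sketch. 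If you want to salvage your route you would need to redesign the scheme so that its per-edge success probability dominates $\E[1/(1+\max(\deg u, \deg v))]$, at which point you are essentially re-deriving the dual mass-distribution argument.
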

Finally, we prove an improved bound on the weighted general graph model. 
\begin{thm}
Let $|V| = n$ and $x \in [0, 1]^{\binom{V}{2}}$ be in the matching polytope of the complete graph on $V_1 \cup V_2$. Then 
for any $w \in (\RR^+ \cup \{ 0 \})^{\binom{V}{2}}$, 
\[
\frac{\E_{G \sim \caldg_{n, w, x}} [\nu_w(G)]}{\sum_{(u, v) \in \binom{V}{2}} w_{u, v} x_{u, v}}  \geq \frac{e^2 - 1}{2e^2}  \geq 0.4323.\]
\label{thm:general_weighted}
\end{thm}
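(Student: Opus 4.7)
The plan is to construct a random matching $M \subseteq E(G)$ with $\E[w(M)] \geq \frac{e^2-1}{2e^2}\sum_e w_e x_e$; since $\nu_w(G) \geq w(M)$ pointwise, linearity of expectation then finishes the theorem.

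The scheme I would use is a local ``favourite-edge'' construction: for every pair $e \in \binom{V}{2}$ draw an independent uniform timestamp $s_e \sim U[0,1]$, declare $v$'s favourite edge to be the edge of $G$ incident to $v$ with minimum $s_e$, and put $e = uv$ into $M$ iff $e$ is the favourite of both $u$ and $v$. Since each vertex has at most one favourite, $M$ is automatically a matching in $G$. Fix an edge $e = uv$ and condition on $e \in G$ and $s_e = a$. The events ``$e$ is $v$'s favourite'' and ``$e$ is $u$'s favourite'' depend on disjoint collections of edge random variables (those at $v \setminus \{u\}$ and at $u \setminus \{v\}$) and are therefore independent; a competing edge $e' = vu'$ disqualifies $e$ at $v$ exactly when $e' \in G$ and $s_{e'} < a$, an event of probability $x_{e'}a$. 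Integrating over $s_e$,
\[
\Pr[e \in M] \;=\; x_e \int_0^1 \prod_{u' \neq u}(1 - x_{vu'} a) \prod_{v' \neq v}(1 - x_{uv'} a)\, da.
\]

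The core task is to show that the $w_e$-weighted aggregate of these integrals is at least $\tfrac{1-e^{-2}}{2}\sum_e w_e x_e$. A naive Bernoulli bound $\prod(1 - x_{e'}a) \geq 1 - a\sum_{e'} x_{e'} \geq 1 - a$ only delivers $1/3$, so the argument must be amortised across edges. My plan is to swap the sum and the integral, exploit the identity $\sum_u x_{uv}\prod_{u' \neq u}(1 - x_{vu'} a) = -\tfrac{d}{da}\prod_u(1 - x_{vu}a)$ to re-express the aggregate integrand in terms of the per-vertex potentials $P_v(a) := \prod_u (1 - x_{vu} a)$, and then use the polytope constraint $\sum_u x_{vu} \leq 1$ to argue that the worst case reduces to the fully-spread-out configuration, where $P_v(a) \to e^{-d_v a}$ and the integrand becomes $e^{-2a}\sum_e w_e x_e$. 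Integration then yields the advertised $\int_0^1 e^{-2a}\,da = (1-e^{-2})/2 = (e^2-1)/(2e^2)$.

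The main obstacle is this amortisation step. Concentrating $v$'s fractional mass on a single neighbour $u^*$ drives the per-edge integrand toward zero on the ``light'' edges $e$ with $x_e \to 0$, but such edges contribute negligibly to both $\E[w(M)]$ and to $\sum_e w_e x_e$, while the ``heavy'' edge $vu^*$ picks up integrand close to $1$. Proving that this trade-off never depresses the weighted ratio below $(1-e^{-2})/2$ in any adversarial $x$ is the technical heart of the result; this is plausibly where the paper's ``local distribution scheme on the dual'' enters, by exhibiting an explicit per-vertex dual potential whose value certifies the target constant against every valid $x$ rather than edge-by-edge.
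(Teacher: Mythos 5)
Your approach is genuinely different from the paper's, and unfortunately it contains a gap that is fatal for the claimed constant. The paper's proof does not construct any explicit random matching scheme; instead it follows the differential-inequality template of Chekuri et al.\ (Lemma~4.9 of~\cite{CVZ14}): set $\phi(t) := \E_{G \sim \caldg_{n,w,tx}}[\nu_w(G)]$, prove the combinatorial inequality $\sum_e x_e\big(\nu(G \cup \{e\}) - \nu(G \setminus \{e\})\big) + 2\nu(G) \geq \sum_e x_e w_e$ for every fixed $G$ by a case analysis on how a candidate edge $e$ meets a fixed maximum matching of $G$, conclude $\phi'(t) \geq \sum_e x_e w_e - 2\phi(t)$, and integrate with the factor $e^{2t}$ to get $\phi(1) \geq \frac{e^2-1}{2e^2}\sum_e x_e w_e$. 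No per-edge selection rule or dual potential appears; the ``local distribution scheme on the dual'' belongs to the bipartite theorems, not this one.

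The deeper problem is that your favourite-edge scheme cannot reach $\frac{e^2-1}{2e^2}$. Because $M$ is built without looking at $w$ and $w$ is chosen adversarially, the inequality $\E[w(M)] \geq c\sum_e w_e x_e$ for all $w \geq 0$ is \emph{equivalent} to the pointwise guarantee $\Pr[e \in M] \geq c\,x_e$ for every $e$: just put all weight on the single worst edge. So there is no room for the cross-edge amortisation you describe; your plan must certify $c$ edge-by-edge or fail. And the favourite-edge scheme does not give $c > 1/3$ pointwise. Take $V \supseteq \{u,v,u^\ast,v^\ast\}$ with $x_{uv} = \varepsilon$, $x_{uu^\ast} = x_{vv^\ast} = 1-\varepsilon$, all other coordinates zero (this is easily checked to lie in the matching polytope), and $w_{uv} = 1$, $w_f = 0$ otherwise. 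Your formula gives
\[
\frac{\Pr[uv \in M]}{x_{uv}} \;=\; \int_0^1 \bigl(1-(1-\varepsilon)a\bigr)^2\,da \;=\; \frac{1+\varepsilon+\varepsilon^2}{3} \;\xrightarrow[\varepsilon \to 0]{}\; \frac{1}{3} \;<\; \frac{e^2-1}{2e^2}.
\]
The ``fully-spread-out'' configuration you identified, with both endpoints spreading mass over $n-1$ tiny edges, does give $\int_0^1 e^{-2a}\,da = \frac{e^2-1}{2e^2}$, but it is not the worst case; concentrating the neighbours' mass is strictly worse, exactly because squaring the linear factor $1-a$ is worse than the product limit $e^{-2a}$ on $[0,1]$. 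To push past $1/3$ you would need a weight-aware selection rule (or something like the paper's ODE argument), not the weight-oblivious timestamp scheme.
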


\subsection{Applications and Related Work.}
\paragraph*{Contention Resolution Schemes and Correlation Gap.}
 
Our work is inspired by and related to the rounding algorithms studied
in approximation algorithms.  Given a downward-closed family $\calI \subseteq
2^E$ defined over a ground-set $E$ and a submodular function $f : 2^E \rightarrow \RR^+$, Chekuri et
al.~\cite{CVZ14} considered the problem of finding $\max_{S \in \calI} f(S)$
and introduced {\em contention resolution schemes} (CR schemes) to obtain
improved approximation algorithms for numerous problems. Let $P_{\calI}$ be 
the convex combination of all incidence vectors $\{ 1_{S} \}_{S \in \calI}$. A
$c$-CR scheme $\pi$ for $x \in P_{\calI}$ is a procedure that, when
$R$ is a random subset of $E$ with $e \in R$ independently with probability
$x_e$, returns $\pi(R) \subseteq R$ such that $\pi(R) \in \calI$ with
probability $1$ and $\Pr[e \in \pi(R)] \geq c$ for all $e \in E$.  

To construct a CR scheme, they introduced the notion of {\em correlation gap of a polytope}, inspired by~\cite{ADSY12}.\footnote{\cite{ADSY12} defined the correlation gap of a set function $f : 2^E \to \RR^+$. Our results apply to this definition too when $f$ denotes the weight of the maximum matching. } 
Formally, the correlation gap of $\calI$ is defined as
\begin{equation}
    \label{def:cor}
    \kappa(\calI) := \inf_{x \in P_{\calI}, \, w \geq 0} \frac{\E_{R \sim {\mathcal D}_x}[\max_{S \subseteq R, S \in \calI} \sum_{e \in S} w_e ]}{\sum_{e \in E} x_e w_e },
\end{equation}
where ${\mathcal D}_x$ is the distribution where each element $e$ appears in $R$ with probability $x_e$ independently. 
It is easy to see that the existence of $c$-CR scheme for all $x \in P_{\mathcal{I}}$ implies $\kappa(\calI) \geq c$. 
Chekuri et al.~\cite{CVZ14} proved the converse that every $x \in P_{\mathcal{I}}$ admits a $\kappa(\calI)$-CR scheme.  

By setting $E$ to be the set of all possible edges of a complete (bipartite) graph, 
and $\calI$ to be the set of all matchings of a complete graph, 
our Theorem~\ref{thm:bipartite_weighted} and Theorem~\ref{thm:general_weighted} for weighted bipartite graphs and weighted general graphs 
imply that there exist $0.4481$-CR scheme and $0.4323$-CR scheme for bipartite matching polytopes and general matching polytopes respectively.
Note that these lower bounds hold when $E'$ is the set of edges and $\calI'$ is a matching polytope of an arbitrary graph $G'$, since 
\begin{align*}
    \kappa(\calI) &= \inf_{x \in P_{\calI}, \, w \geq 0} \frac{\E_{R \sim {\mathcal D}_x}[\max_{S \subseteq R, S \in \calI} \sum_{e \in S} w_e ]}{\sum_{e \in E} x_e w_e } \\
    &\leq \inf_{x|_{E'} \in P_{\calI'}, \, w|_{E'} = 0} \frac{\E_{R \sim {\mathcal D}_x}[\max_{S \subseteq R, S \in \calI} \sum_{e \in S} w_e ]}{\sum_{e \in E} x_e w_e }
    = \kappa(\calI').
\end{align*}

\paragraph*{Maximum Matching of Random Graphs.}

The study of maximum matchings in random graphs has a long history. It was pioneered
by the work of Erd{\H{o}}s and R{\'e}nyi~\cite{ER66, ER68}, where they proved that a random
graph $G_{n, p}$ has a perfect matching with high probability when 
$p = \Omega (\frac{\ln n}{n} )$. The case for sparse graphs was investigated
by Karp and Sipser~\cite{KS81} who gave an accurate estimate of $\nu(G)$ for  $G_{n, p}$ where
$p = \frac{c}{n - 1}$ for some constant $c > 0$.

After these two pioneering results, subsequent work has addressed two aspects.
The Karp-Sipser algorithm is a simple randomized greedy algorithm, and the first line of works
extend the range of models where this algorithm (or its variants) 
returns an almost maximum matching.  Aronson et al.~\cite{AFP98} and Chebolu et
al.~\cite{CFM10} augmented the Karp-Sipser algorithm to achieve tighter results
in the standard $G_{n, p}$ model. Bohman and Frieze~\cite{BF11} considered a
new model where a graph is drawn uniformly at random from the collection of
graphs with a fixed degree sequence and gave a sufficient condition where the
Karp-Sipser algorithm finds an almost perfect matching. 

The second line of work is based on the following observation: the standard 
$G_{n, p}$ model, $p = \Omega(\frac{\ln n}{n})$ is required to have a perfect
matching, because otherwise there will be an isolated vertex.  This naturally led
to the question of finding a natural and {\em sparser} random graph model with
a perfect matching. The considered models include a random regular graph, and a
$G_{n, p}$ with prescribed minimal degree. We refer the reader to the work of
Frieze and Pittel~\cite{FP04} and
Frieze~\cite{Frieze05} and references therein. 

\subsection{Organization}
\label{subsec:techniques}
Our main technical contribution is lower bounding correlation gaps via local distribution schemes for dual variables, 
which are used to prove Theorem~\ref{thm:bipartite_unweighted} and Theorem~\ref{thm:bipartite_weighted} 
for unweighted and weighted bipartite graphs. 
We present this framework in Section~\ref{subsec:techniques} and 
prove our bounds for unweighted bipartite graphs (Section~\ref{sec:bipartite_unweighted}) and 
weighted bipartite graphs (Section~\ref{sec:bipartite_weighted}).
Our result for weighted general graphs is presented in Section~\ref{sec:general_weighted}. 

\section{Techinques for Bipartite Graphs}
\label{subsec:techniques}
Let $V = V_1 \cup V_2$ be the set of vertices with $|V_1| = |V_2| = n$, $E := V_1 \times V_2$. 
Fix $w \in ( \RR^+ \cup \{ 0 \} )^E$ and $x \in [0, 1]^E$ in the bipartite matching polytope of $(V, E)$.

Our proofs for Theorem~\ref{thm:bipartite_unweighted} and~\ref{thm:bipartite_weighted} for bipartite graphs follow the following general framework. Let $G = (V, E(G))$ be a sampled from the distribution where each potential edge $e \in E$ appears with probability $x_e$ independently (recall that $E = V_1 \times V_2$ is the set of all potential edges and $E(G)$ is the edges of one sample $G$). 
Let $y(G) \in ( \RR^+ \cup \{ 0 \} )^V$ be an optimal {\em fractional vertex cover} such that for every $e = (u, v) \in E(G)$, $y_u(G) + y_v(G) \geq w_e$. 
By K\"{o}nig-Egerv\'{a}ry theorem, $\| y (G) \|_1 = \nu(G)$. 

Given $G$, consider the situation where initially each vertex $v$ has {\em mass} $y_v(G)$, and each potential edge has mass $y_e(G) = 0$ (we slightly abuse notation and consider $y(G) \in (\RR^+ \cup \{0 \})^{V \cup E}$). We construct {\em local distribution schemes} $F_G : (V \cup E) \times (V \cup E) \to \RR$ where $F_G(a, b)$ indicates the amount of mass sent from $a$ to $b$. We require that $F_G(a, a) = 0$, but we allow $F_G(a, b) \neq -F_G(b, a)$ for $a \neq b$ (the net flow from $a$ to $b$ in this case is $F_G(a, b) - F_G(b, a)$). 
Let $t(G) \in \RR^{V \cup E}$ denote the mass of each vertex and edge after the distribution.
\[
t_a(G) := y_a(G) + \sum_{b \in V \cup E} F_G(b, a) 
- \sum_{b \in V \cup E} F_G(a, b). 
\]
We choose $F_G$ so that it ensures $t_v(G) \geq 0$ for every $v \in V$. This implies 
\[
\sum_{e \in E} t_e(G) \leq 
\sum_{a \in V \cup E} t_a(G) = 
\sum_{a \in V \cup E} y_a(G) =
\sum_{v \in V} y_v(G) = \nu(G). 
\]
Therefore, if we prove that for each potential edge $e \in E$
\begin{equation}
\E_G [t_e(G)] \geq \alpha \cdot w_e x_e, 
\label{eq:local}
\end{equation}
for some $\alpha > 0$, it implies that 
\[
\E_G[\nu(G)] \geq \alpha \cdot \sum_{e \in E} \E_G[ t_e(G) ] \geq \alpha \cdot \sum_{e \in E} w_e x_e.
\]
For weighted and unweighted cases, we construct different local distribution schemes $\{ F_G \}_G$ that prove~\eqref{eq:local} with different values of $\alpha$. 

\paragraph*{Weighted Bipartite Graphs.} 
Given a sample $G = (V, E(G))$ and a fractional vertex cover $y \in (\RR^+ \cup \{ 0 \})^{V }$, our $F_G(v, e) = y_v(G) / \deg_G(v)$ if $e \in E(G)$ is an edge incident on $v \in V$, and $0$ otherwise. 
Intuitively, each vertex $v$ {\em distributes} its mass $y_v(G)$ evenly to its incident edges in $G$.
This clearly satisfies $t_v(G) \geq 0$ for every $v \in V$, and for each $e = (u, v) \in E$, we use the following approximation: 
\begin{align*}
\E_G[t_e(G)] & =
\Pr[e \in G] \cdot \E_G \bigg[t_e(G) | e \in G\bigg] \\
& = x_e \E_G\bigg[ \frac{y_u(G)}{\deg_G(u)} + \frac{y_v(G)}{\deg_G(v)} | e \in G\bigg] \\
& \geq x_e \E_G\bigg[ (y_u(G) + y_v(G)) \frac{1}{\max(\deg_G(u), \deg_G(v))} | e \in G\bigg] \\
& \geq x_e w_e \E_G\bigg[ \frac{1}{\max(\deg_G(u), \deg_G(v))} | e \in G\bigg].
\end{align*}
Therefore, to prove Theorem~\ref{thm:bipartite_weighted}, it suffices to prove that for every potential edge $e \in E$,
\[
\E_{G \sim \caldb_{n, w, x}}\bigg[ \frac{1}{\max(\deg_G(u), \deg_G(v))} | e \in G\bigg] \geq 0.4481, 
\]
when $G$ is sampled from $\caldb_{n,w,x}$ with $x$ in the matching polytope. 
Experimentally trying several extreme cases indicates that 
the worst case for $e = (u, v) \in E$ happens when $x_e = \varepsilon$ for very small $\varepsilon$, 
$u$ has only one other edge $e_u$ with $x_{e_u} = 1 - \varepsilon$, and $v$ is incident on $n - 1$ edges $e_{v_1}, \dots, e_{v_{n-1}}$ with $x_{e_{v_i}} = \frac{1 - \varepsilon}{n - 1}$. 
As $\varepsilon$ approaches to $0$, 
$\E_G[ \frac{1}{\max(\deg_G(u), \deg_G(v))} | e \in G]$ converges to 
$\E[\frac{1}{1 + Y_U}]$ as $n$ grows, where $Y_U$ is drawn from a binomial distribution $B(n-1, \frac{1}{n - 1})$. 
Section~\ref{sec:bipartite_weighted} formally proves that this is indeed the worst case. 

\paragraph*{Unweighted Bipartite Graphs.} 


One simple but important observation is that in the above example where 
$\E_G[t_e(G)] \approx \E[\frac{1}{1 + Y_U}] x_e$, $e$ is an edge with very small $x_e = \varepsilon$, and it is adjacent to a large edge $e_u$ with $x_{e_u} = 1 - \varepsilon$. 
From the persepctive of $x_{e_u}$, the expected number of adjacent edges is at most $2 \varepsilon$, so $\E_G[t_{e_u}(G)] \approx x_{e_u} \approx 1$. 
Since $e_u$ gets much more than what it needs ($\E[t_{e_u}] \geq 0.476$ suffices to prove Theorem~\ref{thm:bipartite_unweighted}), it is natural to take some value from $t_{e_u}(G)$ to increase $t_{e}(G)$. 

Formally, given $G = (V, E(G))$, our new local distribution scheme $F_G : (V \cup E) \times (V \cup E) \to \RR$ is defined as follows.
Let $c$ be an universal constant that will be determined later. 
\begin{equation}
\label{eq:F}
F_G(a, b) = 
\begin{cases}
\frac{y_a(G)}{\deg_G(a)} \quad & \mbox{ if } a \in V, b \in E(G), a \in b \\
c x_a^2 x_b & \mbox{ if } a \neq b \in E, a \cap b \neq \emptyset \\
0 & \mbox{ otherwise.} 
\end{cases}
\end{equation}

Intuitively, on top of the old local distribution scheme for weighted graphs, each edge $e$ pays $c x_e^2 x_f$ to every adjacent edge $f$ with probability $1$ (this quantity does not depend on $G$).
Because this term quadratically depends on the $x$ value of the sender, this payment penalizes edges with large $x$ values to help edges with small $x$ values.
For a fixed edge $e = (u, v) \in E$ with fixed $x_e = \varepsilon$, 
Theorem~\ref{thm:mass2} shows that the worst case is when both $u$ and $v$ have $n-1$ other edges of whose $x$ values are equal to $\frac{1 - \varepsilon}{n}$. 
Finally, Lemma~\ref{lem:finish} shows that $\E[t_e] \geq 0.476$ for every $\varepsilon \in (0, 1]$, proving Theorem~\ref{thm:bipartite_unweighted}.

\section {Unweighted Bipartite Graphs}
\label{sec:bipartite_unweighted}
We prove Theorem~\ref{thm:bipartite_unweighted} for unweighted bipartite graphs. Given $G = (V, E(G))$, consider the local distribution scheme $F_G : (V \cup E) \times (V \cup E) \to \RR$ given in~\eqref{eq:F}. 
This implies that the mass after this new distribution scheme for an edge $e = (u, v)$ is given by 
\[
t_e(G) = \alpha_e(G) + \sum_{f \in E \setminus \{ e \}: f \ni u} c(x_e x_f^2 - x_e^2 x_f) +
\sum_{g \in E  \setminus \{ e \}: g \ni v} c(x_g^2 x_e - x_e^2 x_g),
\]
where $\alpha_e(G) := y_u(G) / \deg_G(u) + y_v(G) / \deg_G(v)$ denotes the mass after the old distribution scheme used for weighted bipartite graphs. 
We define $\beta_e(x)$ to be the following.
\begin{align*}
\beta_e(x) := \, &
\E_{G \sim \caldb_{n,x} }[t_e(G)]  \\
= \, &  \E_{G \sim \caldb_{n,x}} [\alpha_e(G)] + \sum_{f \in E \setminus \{ e \}: f \ni u} c(x_e x_f^2 - x_e^2 x_f) +
\sum_{g \in E  \setminus \{ e \}: g \ni v} c(x_g^2 x_e - x_e^2 x_g)
\end{align*}

To prove Theorem~\ref{thm:bipartite_unweighted}, it suffices to prove that $\beta_e(x) \geq 0.476 x_e$ for each $e$. 
Fix $e = (u, v)$. Let $e_{u_1}, \dots, e_{u_{n-1}}$ be $n-1$ other edges incident on $u$ and $e_{v_1}, \dots, e_{v_{n-1}}$ be $n - 1$ other edges incident on $v$. 
$\E_{G \sim \caldb_{n,x}} [\alpha_e(G)]$ is lower bounded by 
$x_e \E_G[ \frac{1}{\max(\deg_G(u), \deg_G(v))} | e \in G]$ as before. 
Define $F(x_0, y_1, \dots, y_{n - 1}, z_1, \dots, z_{n - 1})$ by 
\begin{align*}
F(x_0, y_1, \dots, y_{n - 1}, z_1, \dots, z_{n - 1}) &:= 
x_0 \E[\frac{1}{1 + \max(Y, Z)}] + \sum_{i = 1}^{n - 1} c (x_0 y_i^2 - x_0^2 y_i) \\ 
    &\quad + \sum_{i = 1}^{n - 1} c (x_0 z_i^2 - x_0^2 z_i),
\end{align*}
where $Y := Y_1 + \dots + Y_{n - 1}$ and $Z := Z_1 + \dots + Z_{n - 1}$ and each $Y_i$ (resp. $Z_i$) is an independent Bernoulli random variable with $\E[Y_i] = y_i$ (resp. $\E[Z_i] = z_i$). 
By construction, $\beta_e(x) \geq F(x_e, x_{e_{u_1}}, \dots, x_{e_{u_{n - 1}}}, x_{e_{v_1}}, \dots, x_{e_{v_{n - 1}}})$. Given fixed $\sum_{i=1}^{n-1} x_{e_{u_i}}$ and $\sum_{i=1}^{n-1} x_{v_{u_i}}$, the following theorem shows that $F$ is minimized when $x_{e_{u_1}} = \dots = x_{e_{u_{n - 1}}}$ and 
$x_{e_{v_1}} = \dots = x_{e_{v_{n - 1}}}$. 

\begin{thm}
\label{thm:mass2}
For $x_0, y_1, \dots, y_m, z_1, \dots, z_m \in [0, 1]$ where $y_s := \sum_{i=1}^m y_i \leq 1 - x_0$ and $z_s := \sum_{i=1}^m z_i \leq 1 - x_0$, 
\begin{align*}
F(x_0, y_1, \dots, y_m, z_1, \dots, z_m) 
\geq F(x_0, \frac{y_s}{m}, \dots, \frac{y_s}{m}, \frac{z_s}{m}, \dots, \frac{z_s}{m}).
\end{align*}
\end{thm}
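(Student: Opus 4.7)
My plan is to reduce the statement to a pairwise symmetrization: with the other arguments held fixed, replacing two coordinates $y_1, y_2$ by their common mean $\bar y := (y_1+y_2)/2$ weakly decreases $F$. By continuity and the symmetry of $F$ separately in the $y_i$'s and in the $z_j$'s, iterating this pair-averaging step (a standard Birkhoff/majorization argument) equalizes all $y_i$ to $y_s/m$ and all $z_j$ to $z_s/m$. So the real work is the single pair step, and by symmetry it is enough to do it for $y_1, y_2$.

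Fix $y_3,\dots,y_m,z_1,\dots,z_m,x_0$, set $s := y_1+y_2$, and parameterize by $p := y_1 y_2 \in [0, s^2/4]$. Writing $Y = Y_1+Y_2+Y'$ with $Y' := Y_3+\cdots+Y_{m}$ independent of $(Y_1,Y_2)$, the pair marginal is
\[
\Pr[Y_1+Y_2 = 0] = 1 - s + p,\quad \Pr[Y_1+Y_2 = 1] = s - 2p,\quad \Pr[Y_1+Y_2 = 2] = p,
\]
so with $g(t) := 1/(1+t)$ the expectation $\E[g(\max(Y,Z))]$ is affine in $p$ with slope
\[
A := \E\bigl[\phi(Y',Z)\bigr], \qquad \phi(a,b) := g(\max(a,b)) - 2g(\max(a+1,b)) + g(\max(a+2,b)).
\]
On the penalty side, $\sum_i y_i^2 = s^2 - 2p + \mathrm{const}$, so the penalty $\sum_i c(x_0 y_i^2 - x_0^2 y_i)$ is also affine in $p$, with slope $-2c x_0$. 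Combining, $F(p) = x_0 p(A - 2c) + \mathrm{const}$, and the desired inequality $F(y_1,y_2,\dots) \ge F(\bar y, \bar y, \dots)$ (i.e., $F(p) \ge F(s^2/4)$) is equivalent to $A \le 2c$.

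It then suffices to establish the pointwise bound $\phi(a,b) \le \tfrac13$ for all $a,b \in \ZZ_{\ge 0}$, which yields $A \le 1/3 \le 2c$ so long as the constant $c$ (fixed later in Lemma~\ref{lem:finish}) satisfies $c \ge 1/6$. The case split is short: if $b \le a$ the three maxima are $a,a+1,a+2$ and convexity of $g$ gives $\phi(a,b) = g(a) - 2g(a+1) + g(a+2) = 2/\bigl((1+a)(2+a)(3+a)\bigr)$, which is nonnegative and maximized at $a=0$ with value $1/3$; if $b = a+1$ the maxima are $b,b,b+1$ and $\phi(a,b) = g(b+1) - g(b) < 0$; if $b \ge a+2$ all three maxima equal $b$ and $\phi(a,b) = 0$. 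Hence $\phi \le 1/3$ uniformly, as required.

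The main obstacle is the case $b = a+1$: since $\phi$ is genuinely negative there, the first term of $F$ by itself is \emph{not} Schur-convex in $(y_1,\dots,y_m)$, and the quadratic penalty $c x_0 y_i^2$ is introduced precisely to dominate this negative contribution. The clean affine-in-$p$ decomposition is what makes the crude uniform bound $\phi \le 1/3$ sufficient, avoiding any delicate stochastic-domination argument on $(Y',Z)$. Once the pair step is established, iterating on the $y$-coordinates and then on the $z$-coordinates (using continuity of $F$ on the compact feasible slice, and that the constraints $y_s, z_s \le 1 - x_0$ and $y_i,z_j \in [0,1]$ are preserved by averaging) finishes the proof.
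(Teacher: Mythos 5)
Your proof is correct and takes essentially the same approach as the paper's: the paper performs two-coordinate smoothing via the directional derivative $\frac{\partial}{\partial y_m} - \frac{\partial}{\partial y_1}$, while you parameterize the pair by $p = y_1 y_2$ and observe that $F$ is affine in $p$, but both arguments boil down to the identical pointwise bound $\frac{1}{1+a} - \frac{2}{2+a} + \frac{1}{3+a} = \frac{2}{(1+a)(2+a)(3+a)} \leq \frac{1}{3}$ (your $\phi(a,b)\leq \tfrac13$ case analysis, the paper's bound on the conditional second difference) together with the choice $c = \frac{1}{6}$. The affine-in-$p$ packaging is a slightly cleaner presentation of the same calculation and makes it transparent why the quadratic penalty term must have slope at least $\tfrac13$.
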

\begin{proof}
Without loss of generality, assume $y_1 \geq \dots \geq y_m$. 
We will show that if $y_1 > y_m$, 
\begin{equation}
\frac{\partial F}{\partial y_m} - \frac{\partial F}{\partial y_1} \leq 0.
\label{eq:mass2}
\end{equation} This implies that as long as $y_1 > y_m$, decreasing $y_1$ and increasing $y_m$ by the same amount will never increase $F$ while maintaining $y_1 + \dots + y_m =y_s$, so $F$ is minimized when $y_1 = \dots = y_m = \frac{y_s}{m}$. The same argument for $z_1, \dots, z_m$ will prove the theorem.

Let $Y := Y_1 + \dots + Y_m$ and $Z := Z_1 + \dots + Z_m$, where each $Y_i$ (resp. $Z_i$) is an independent Bernoulli random variable with $\E[Y_i] = y_i$ (resp. $\E[Z_i] = z_i$). 
To prove~\eqref{eq:mass2}, we first compute 
$\frac{\partial \E[\frac{1}{1 + \max(Y, Z)}]}{ \partial y_m} - \frac{\partial \E[\frac{1}{1 + \max(Y, Z)}]}{\partial y_1}$. Let $Y' := Y_2 + \dots + Y_{m - 1}$. 
We decompose $\E[\frac{1}{1 + \max(Y, Z)}]$ as follows. 
\begin{align*}
& \E[\frac{1}{1 + \max(Y, Z)}]   \\
=& \sum_{i = 0}^m \sum_{j = 0}^m \bigg( \Pr[Y = i] \cdot \Pr[Z = j] \cdot \frac{1}{1 + \max(i, j) } \bigg) \\ 
=& \sum_{i = 0}^m  \bigg( \Pr[Y' = i] \cdot \Pr[Z \leq  i] \big(  \frac{(1 - y_1)(1 - y_m)}{1 + i} 
+ \frac{y_1(1 - y_m) + (1 - y_1)y_m}{2 + i}
+ \frac{y_1 y_m}{3 + i}
\big) \bigg) \\
+& \sum_{i = 0}^m \bigg( \Pr[Y' = i] \cdot \Pr[Z = i + 1] \big(  \frac{1 - y_1 y_m}{2 + i}
+ \frac{y_1 y_m}{3 + i}
\big) \bigg) +  \sum_{i = 0}^m \Pr[Y' = i] \cdot \Pr[Z \geq i + 2] \cdot \frac{1}{3 + i}
\end{align*}
Therefore, the directional derivative can be written as 
\begin{align*}
& (\frac{\partial}{\partial y_m } - \frac{\partial }{ \partial y_1}) 
\E[\frac{1}{1 + \max(Y, Z)}] \\
=& (y_1 - y_m) \sum_{i = 0}^m  \bigg( \Pr[Y' = i] \cdot \Pr[Z \leq  i] \big(  \frac{1}{1 + i} 
- \frac{2}{2 + i} + \frac{1}{3 + i}
\big) \bigg) \\
+& (y_1 - y_m) \sum_{i = 0}^m \bigg( \Pr[Y' = i] \cdot \Pr[Z = i + 1] \big( - \frac{1}{2 + i} + \frac{1}{3 + i} \big) \bigg) \\
\leq& (y_1 - y_m) \sum_{i = 0}^m  \bigg( \Pr[Y' = i] \cdot \Pr[Z \leq  i] \big(  \frac{1}{1 + i} 
- \frac{2}{2 + i} + \frac{1}{3 + i}
\big) \bigg) \\
\leq& (y_1 - y_m) \sum_{i = 0}^m  \bigg( \Pr[Y' = i] \cdot \Pr[Z \leq  i] \big(  \frac{1}{1 + i} 
- \frac{2}{2 + i} + \frac{1}{3 + i}
\big) \bigg) \\
\leq& \frac{y_1 - y_m}{3},
\end{align*}
where the last inequality follows from the fact that 
\[
\big(  \frac{1}{1 + i} 
- \frac{2}{2 + i} + \frac{1}{3 + i}
\big) = \frac{2}{(1 + i)(2 + i)(3 + i)} \leq \frac{1}{3}.
\]
Finally,
\begin{align*}
& (\frac{\partial}{\partial y_m } - \frac{\partial }{ \partial y_1})  F  \\
=& (\frac{\partial}{\partial y_m } - \frac{\partial }{ \partial y_1}) 
(x_e \E[\frac{1}{1 + \max(Y, Z)}] + c x_e y_1^2 - c x_e^2 y_1 + c x_e y_m^2 - c x_e^2 y_m) \\
\leq& \frac{x_e(y_1 - y_m)}{3} - 2c x_e (y_1 - y_m) = 0.
\end{align*}
By taking $c = \frac{1}{6}$. 
\end{proof}

Therefore, for any $e \in E$, $\beta_e(x) \geq F(x_e, \frac{y_s}{n - 1}, \dots, \frac{y_s}{n - 1}, \frac{z_s}{n - 1}, \dots, \frac{z_s}{n - 1})$ for some $y_s \leq 1 - x_e$ and $z_s \leq 1 - x_e$. 
Let 
\begin{align*} G(x_e, y_s, z_s) := & F(x_e, \frac{y_s}{n - 1}, \dots, \frac{y_s}{n - 1}, \frac{z_s}{n - 1}, \dots, \frac{z_s}{n - 1}) \\
=&
x_e \E[\frac{1}{1 + \max(Y, Z)}] + (n - 1) c (x_e (\frac{y_s}{n - 1})^2  - x_e^2 (\frac{y_s}{n - 1}))  \\
&+ (n - 1) c (x_e (\frac{z_s}{n - 1})^2 - x_e^2 (\frac{z_s}{n - 1})) \\
=& x_e \E[\frac{1}{1 + \max(Y, Z)}] + c x_e y_s ((\frac{y_s}{n - 1})  - x_e) + c x_e z_s ((\frac{z_s}{n - 1})  - x_e) \\
\geq & 
x_e \E[\frac{1}{1 + \max(Y, Z)}] - 2 c x_e^2
\end{align*}
where $Y \sim \Bino(n - 1, \frac{y_s}{n - 1})$, $Z \sim \Bino(n - 1, \frac{z_s}{n - 1})$. 
Note that the final quantity is minimized when $y_s = z_s = 1 - x_e$. Finally, let 
\[
H_{n - 1}(x_e) := x_e \E[\frac{1}{1 + \max(Y, Z)}] - 2 c x_e^2,
\]
where $Y, Z \sim \Bino(n - 1, \frac{1 - x_e}{n - 1})$. 

\begin{lem}
For any $m \in \mathbb{N}$ and $x_e \in [0, 1]$, $H_m(x_e) \geq 0.476 x_e$. 
\label{lem:finish}
\end{lem}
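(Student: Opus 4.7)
The plan is to reduce $H_m(x_e) \geq 0.476\, x_e$ to a calculation in the Poisson limit. Setting $\lambda := 1 - x_e$ and dividing by $x_e$ (the case $x_e = 0$ is trivial), the inequality becomes $\phi_m(\lambda) - (1-\lambda)/3 \geq 0.476$, where
\[
\phi_m(\lambda) \;:=\; \E_{Y, Z \sim \Bino(m,\, \lambda/m)}\!\left[\frac{1}{1 + \max(Y,Z)}\right].
\]
The tight regime is $\lambda = 1$ (i.e.\ $x_e \to 0$) with $m$ large, since $\Bino(m, \lambda/m) \Rightarrow \Po(\lambda)$ and the Poisson limit realizes essentially the stated constant $0.476$ at $\lambda = 1$; this is where the bound must be tight.

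For the Poisson limit $\phi_\infty(\lambda) := \E_{Y, Z \sim \Po(\lambda)}[1/(1+\max(Y,Z))]$, I would use the integral identity $\tfrac{1}{1+W} = \int_0^1 t^W\, dt$, the Poisson PGF $\E[t^Y] = e^{\lambda(t-1)}$, and the algebraic identity $t^{\max(Y,Z)} + t^{\min(Y,Z)} = t^Y + t^Z$ (valid since the exponents are just a permutation of $Y, Z$). Integrating gives
\[
\phi_\infty(\lambda) \;=\; \frac{2(1-e^{-\lambda})}{\lambda} \;-\; \E\!\left[\frac{1}{1+\min(Y,Z)}\right],
\]
and the remaining expectation is controlled by the survival identity $\Pr[\min(Y,Z) \geq k] = \Pr[Y \geq k]^2$. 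To pass from finite $m$ back to the Poisson case, I would use a splitting/coupling argument on Bernoulli trials to compare $\phi_m(\lambda)$ with $\phi_\infty(\lambda)$, possibly handling small $m$ by separate direct inspection.

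After reducing to the Poisson limit, I would verify $\phi_\infty(\lambda) - (1-\lambda)/3 \geq 0.476$ on $[0,1]$. At $\lambda = 0$ the expression equals $2/3$, trivially sufficient; at $\lambda = 1$, Abel-summing yields $\phi_\infty(1) = e^{-2}\sum_{k \geq 0} Q_k^2/((k+1)(k+2))$ with $Q_k = \sum_{j \leq k} 1/j!$, which numerically equals $\approx 0.4763$, narrowly above the threshold. Monotonicity of the combined function in $\lambda$ handles the intermediate values. The main obstacle is the numerical tightness at $\lambda = 1$: the true value exceeds $0.476$ only by a few thousandths, so a careful truncation of the Poisson series together with an explicit tail bound will be required to certify the inequality rigorously. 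A secondary difficulty is the finite-$m$ reduction, as the comparison between $\phi_m$ and $\phi_\infty$ need not go in a single direction for all $m$, and the small-$m$ cases may need a tailored argument.
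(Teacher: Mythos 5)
Your plan runs parallel to the paper's in the one place that matters: both pass to the Poisson limit $Y, Z \sim \Po(1-x_e)$ and then certify the resulting scalar inequality numerically, and your Abel-summed value $\phi_\infty(1) = e^{-2}\sum_{k \ge 0} Q_k^2/\bigl((k+1)(k+2)\bigr) \approx 0.4763$ (with $Q_k = \sum_{j \le k} 1/j!$) matches the paper's $0.476$. Your decomposition of the Poisson expectation via $1/(1+W) = \int_0^1 t^W\,dt$, the PGF $\E[t^Y]=e^{\lambda(t-1)}$, and the identity $t^{\max(Y,Z)}+t^{\min(Y,Z)} = t^Y + t^Z$ is a cleaner route than the paper's, which simply expands $\E[1/(1+\max(Y,Z))]$ as a power series in $(1-x)$, truncates at degree $15$, and invokes positivity of the discarded terms.

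The crucial issue is that your worry about the finite-$m$ reduction is not a loose end to be patched later --- the lemma as stated is in fact false for small $m$, and the paper's proof misses this precisely because it silently replaces $\Bino(m,(1-x_e)/m)$ with $\Po(1-x_e)$ with no justification. The substitution goes the wrong way: $\E_{\Bino(m,1/m)}\bigl[1/(1+\max(Y,Z))\bigr]$ converges to $\approx 0.4763$ from \emph{below}. Numerically it equals $11/24 \approx 0.458$ at $m=2$, $\approx 0.463$ at $m=3$, $\approx 0.468$ at $m=5$, all under $0.476$, so $H_m(x_e)/x_e < 0.476$ for $x_e$ small. Likewise for $m=1$ one has $H_1(x_e)/x_e = (1+x_e^2)/2 - x_e/3$, minimized at $x_e=1/3$ with value $4/9 \approx 0.444$. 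The failure is repairable: in passing from $G(x_e, y_s, z_s)$ to $H_{n-1}(x_e)$ the paper discards the nonnegative quantity $c x_e (y_s^2 + z_s^2)/(n-1) = \Theta(x_e/m)$, which easily covers the deficit when $m$ is small, so a correct proof should work with $G$ (or $F$) directly for small $m$ and then argue separately that the large-$m$ case reduces to the Poisson computation. But this is genuine missing content, and neither your proposal nor the paper's written proof supplies it.
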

\begin{proof}
    Since the binomial distribution is approximated by the Poisson distribution in the limit,
    we use this to ease the calculation.  Let $Y,Z \sim \textsf{Poisson}(1-x)$.
    Let $H(x) := x \E[ \frac{1}{1+\max(Y,Z)}] - x^2/3$ (we substitute $c=1/6$ into the earlier equation).
    In particular, we write the expectation in full to get
    \begin{align*}
        \E[\frac{1}{1+\max(Y,Z)}] &= \sum_{k=0}^{\infty} \sum_{j=0}^{\infty} \frac{1}{1+\max(j,k)} e^{-2(1-x)} \frac{(1-x)^{j+k}}{j! k!}  \\
         &= \frac{1}{e^{2(1-x)}}  \sum_{k=0}^{\infty} \Big( \sum_{j=0}^k \frac{1}{1+\max(j,k-j)} \frac{1}{j! (k-j)!} \Big) (1-x)^{k}
    \end{align*}
    Let $P_t(x)$ denote the above sum truncated at $k=t$. I.e.~ 
    \[P_t(x):= \frac{1}{e^{2(1-x)}} \sum_{k=0}^{t}\Big( \sum_{j=0}^k \frac{1}{1+\max(j,k-j)} \frac{1}{j! (k-j)!} \Big) (1-x)^{k}\] 
    This is a degree $t$-polynomial in $(1-x)$ with a normalizing factor of $e^{-2(1-x)}$ and 
    note that $\E[\frac{1}{1+\max(Y,Z)}] \geq P_t(x)$ for any $t\in \mathbb{N}$. 

    Truncating this polynomial with $t=15$, we can see that this has a minimum value of $0.476$ for all values of $x \in [0,1]$. 
    we can see that $ \E[\frac{1}{1+\max(Y,Z)}] - x/3 \geq P_15(x) - x/3 $. In the interval $x\in [0,1]$, this function
    achieves its minimum at $x=0$ achieving a minimum of $0.476$. 
\end{proof}

\section {Weighted Bipartite Graphs}
\label{sec:bipartite_weighted}
We prove Theorem~\ref{thm:bipartite_weighted} for weighted bipartite graphs.
As explained in Section~\ref{subsec:techniques}, it suffices to prove that for each $e = (u, v) \in E$, 
\[
\E_{G \sim \caldb_{n,w, x}}\bigg[ \frac{1}{\max(\deg_G(u), \deg_G(v))} | e \in G\bigg] \geq 0.4481.
\]

Fix $e = (u, v)$ and assume $V = \{ v, v_1, \dots, v_{n-1} \} \cup \{ u, u_1, \dots, u_{n-1} \}$. 
Let $Y = \deg_G(u) - 1$ and $Z = \deg_G(v) - 1$. Given $e \in G$, 
$Y$ and $Z$ can be represented as 
$Y = \sum_{i=1}^{n-1} Y_i$ and
$Z = \sum_{i=1}^{n-1} Z_i$, where $Y_i$ indicates where $(u, v_i) \in E(G)$ and $Z_i$ indicates where $(v, u_i) \in E(G)$. This construction ensures that 
\[
\E_G\bigg[ \frac{1}{\max(\deg_G(u), \deg_G(v))} | e \in G\bigg] = 
\E_{Y, Z}\bigg[ \frac{1}{1 + \max(Y, Z)}\bigg].
\]
Note that $Y_1, \dots, Y_{n - 1}, Z_1, \dots, Z_{n - 1}$ are mutually independent, and $\E[Y], \E[Z] \leq 1$. By monotonicity, assuming $\E[Y] = \E[Z] = 1$ never increases the lower bound. 
The following theorem shows that the worst case happens when one of $Y, Z$ is consistently $1$ and the other is drawn from $\Bino(n - 1, \frac{1}{n - 1})$. 

\begin{thm}
Let $Y = Y_1 + \dots + Y_m$ and $Z = Z_1 + \dots + Z_m$, 
where $Y_1, \dots, Y_m, Z_1, \dots, Z_m$ are mutually independent Bernoulli random variables with $\E[Y] = \E[Z] = 1$. Then, 
\[
\E\bigg[\frac{1}{1 + \max(Y, Z)}\bigg] \geq \E\bigg[\frac{1}{1 + Y_U}\bigg],
\]
where $Y_U$ is drawn from $\Bino(m, \frac{1}{m})$. 
\label{thm:oneside}
\end{thm}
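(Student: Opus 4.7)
My plan is a two-stage symmetrization on the Bernoulli parameters that transforms any admissible $(Y,Z)$ into the extremal configuration described in the theorem (one sum is consistently $1$ and the other is $\Bino(m,1/m)$) via local pairwise moves that do not increase $\E[1/(1+\max(Y,Z))]$, and then identifies the value at this extremal configuration with the stated right-hand side. The underlying primitive is the standard pairwise swap: for any indices $i,j$ with $z_i+z_j$ held fixed the objective is bilinear in $(z_i,z_j)$ along the segment, so its extremes are the midpoint $z_i=z_j$ and the vertex $z_i=z_i+z_j,\;z_j=0$. Writing $h(k):=\E_Y[1/(1+\max(Y,k))]$ and $Z_{-ij}:=\sum_{\ell\ne i,j}Z_\ell$, the midpoint-to-vertex change in the objective equals
\[
-z_iz_j\sum_{k\ge 0}\Pr[Z_{-ij}=k]\bigl(h(k)-2h(k+1)+h(k+2)\bigr),
\]
with an analogous identity for the $y$-side once the $Z$-side has been fixed.

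In Stage~1 I fix the distribution of $Y$ and iterate the pairwise swap in the direction of concentration on the $z_i$'s, pushing all $z$-mass onto a single Bernoulli until $Z\equiv 1$. The objective then reduces to $\E[g(Y)]$ with $g(k):=1/(1+\max(k,1))$. In Stage~2 I iterate the swap on the $y_i$'s in the direction of equalization, arriving at $y_1=\dots=y_m=1/m$, i.e.\ $Y\sim\Bino(m,1/m)$. The final step is to identify the resulting value with the stated right-hand side $\E[1/(1+Y_U)]$.

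The main obstacle is that neither $h$ nor $g$ is discretely convex or concave in a uniform way: direct computation shows $g(0)-2g(1)+g(2)=-\tfrac16<0$ while $g(k)-2g(k+1)+g(k+2)=\tfrac{2}{(k+1)(k+2)(k+3)}>0$ for $k\ge 1$, and an analogous sign change occurs for $h$ near $k=0$ governed by $\Pr[Y=0]$. Hence the pairwise swap is not locally monotone pair-by-pair, and the symmetrization must be carried out by showing the weighted sum above has the correct sign \emph{globally}. I would exploit the constraints $\E[Y]=\E[Z]=1$: split each weighted sum into its $k=0$ and $k\ge 1$ parts, bound $\Pr[Z_{-ij}=0]=\prod_{\ell\ne i,j}(1-z_\ell)$ and $\Pr[Y_{-ij}=0]$ from above using the mean constraints, and verify that the positive $k\ge 1$ contributions dominate the lone $k=0$ negative term. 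Matching the output to the stated right-hand side $\E[1/(1+Y_U)]$ is the final subtle point: since $g(k)=1/(1+k)$ for $k\ge 1$ but $g(0)=\tfrac12\ne 1=1/(1+0)$, this identification will require a separate accounting of the atom $\{Y_U=0\}$ on top of what the symmetrization produces, and I expect the heaviest book-keeping to land in this closing comparison.
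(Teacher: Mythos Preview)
Your Stage~1 does not go through. Concentrating the $z$-mass onto a single coordinate can strictly \emph{increase} the objective whenever $2\Pr[Y=0]<\Pr[Y=1]$. With $h(k)=\E_Y[1/(1+\max(Y,k))]$ one has, for any $Y$ with $\E[Y]=1$,
\[
h(0)-2h(1)+h(2)=\frac{2\Pr[Y=0]-\Pr[Y=1]}{6},
\]
and at the last concentration step (when all $z_\ell$ with $\ell\ne i,j$ are already $0$) one has $\Pr[Z_{-ij}=0]=1$, so your weighted second-difference sum reduces to exactly this quantity. Thus the very move you need to finish Stage~1 increases the objective whenever $2\Pr[Y=0]<\Pr[Y=1]$. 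Concretely, take $m=2$, $Y$ with $(p_0,p_1,p_2)=(0.1,0.8,0.1)$ and $Z\sim\Bino(2,1/2)$: going from $Z\sim\Bino(2,1/2)$ to $Z\equiv 1$ raises $\E[1/(1+\max(Y,Z))]$ from about $0.458$ to $0.485$. Your proposed fix, ``bound $\Pr[Z_{-ij}=0]$ from above using the mean constraints,'' cannot rescue this: the obstruction is governed entirely by the distribution of $Y$, not by $Z$, and $\Pr[Z_{-ij}=0]$ can equal $1$ along the trajectory regardless of the mean constraint on $Z$.

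This is precisely why the paper does \emph{not} run a uniform two-stage symmetrization. It splits on whether $2\Pr[Y=0]\ge\Pr[Y=1]$ (equivalently $t_1\ge t_2/2$ in its notation) for at least one of $Y,Z$. In that case it replaces $Z$ by the constant $1$ in one shot via the identity $\sum_j j\Pr[Z=j]=1$ together with the lemma $t_j/j\le t_2/2$ for $j\ge 3$, then runs (what is essentially your) Stage~2. In the complementary case both $Y$ and $Z$ satisfy $2\Pr[\cdot=0]<\Pr[\cdot=1]$; the paper observes this forces $\Pr[Z=1]\ge 1/2$, replaces $Z$ by $\Bino(2,1/2)$ (which sits exactly on the boundary of Case~1), and then feeds back into Case~1. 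Your plan is missing this case analysis, and the pairwise-swap machinery alone cannot manufacture it. As a side note, the ``final subtle point'' you flagged is a genuine typo in the statement: the extremal value is $\E[1/(1+\max(1,Y_U))]$, not $\E[1/(1+Y_U)]$ (the latter is strictly larger and is violated already by $Y,Z$ i.i.d.\ $\Bino(m,1/m)$); the paper's subsequent lemma and numerical bound use the correct expression.
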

\begin{proof}
We decompose $\E[\frac{1}{1 + \max(Y, Z)}]$ as follows.

\begin{align*}
\E\bigg[\frac{1}{1 + \max(Y, Z)}\bigg]   &= \sum_{i = 0}^m \sum_{j = 0}^m \bigg( \Pr[Y = i] \cdot \Pr[Z = j] \cdot \frac{1}{1 + \max(i, j) } \bigg) \\ 
&= \sum_{i = 0}^m \Pr[Y = i]  \bigg[ \big( \sum_{j = 0}^i \Pr[Z = j] \big) \cdot \frac{1}{1 + i} + \big( \sum_{j = i + 1}^m \Pr[Z = j] \cdot \frac{1}{1 + j} \big) \bigg]  \\
&= \sum_{i = 0}^m \Pr[Y = i] \cdot \frac{1}{1 + i} - \sum_{i = 0}^m \Pr[Y = i] \bigg[  \sum_{j = i+1}^m \Pr[Z = j]  \big( \frac{1}{1 + i} - \frac{1}{1 + j} \big)  \bigg] \\
&= \sum_{i = 0}^m \Pr[Y = i] \cdot \frac{1}{1 + i} - \sum_{j = 1}^m \Pr[Z = j] \bigg[  \sum_{i = 0}^{j - 1} \Pr[Y = i]  \big( \frac{1}{1 + i} - \frac{1}{1 + j} \big)  \bigg].
\end{align*}

Let $t_j := \sum_{i = 0}^{j - 1} \Pr[Y = i] \cdot \big( \frac{1}{1 + i} - \frac{1}{1 + j} \big)$. We prove the following facts about $t_j$'s. 
\begin{lem}
For all $j \geq 3$, $\frac{t_2}{2} \geq \frac{t_j}{j}$. 
\label{lem:gain}
\end{lem}
\begin{proof}
Fix $j \geq 3$. By the definition of $t_2$ and $t_j$, 
\begin{align*}
\noindent \frac{t_2}{2} - \frac{t_j}{j} 
& = \frac{1}{2} \bigg( \Pr[Y = 0] (1 - \frac{1}{3}) + \Pr[Y = 1](\frac{1}{2} - \frac{1}{3}) \bigg)
 \bigg) - \frac{1}{j} \bigg(  \sum_{i = 0}^{j - 1} \Pr[Y = i]  \big( \frac{1}{1 + i} - \frac{1}{1 + j} \big) \bigg) \\
& = \frac{1}{3}  \Pr[Y = 0] + \frac{1}{12} \Pr[Y = 1]  - \frac{1}{j} \bigg(  \sum_{i = 0}^{j - 1} \Pr[Y = i]  \big( \frac{1}{1 + i} - \frac{1}{1 + j} \big) \bigg) \\
&= (\frac{1}{3} - \frac{1}{1 + j}) \Pr[Y = 0] + (\frac{1}{12} - \frac{j - 1}{2j(j + 2)}) \Pr[Y = 1] \\
&\quad -  \frac{1}{j} \bigg( \sum_{i = 2}^{j - 1} \Pr[Y = i]  \big( \frac{1}{1 + i} - \frac{1}{1 + j} \big) \bigg) \\
&\geq \bigg( \frac{1}{3} - \frac{1}{1 + j} - \frac{1}{j} \sum_{i = 2}^{j - 1} \big( \frac{1}{1 + i} - \frac{1}{1 + j} \big) \bigg) \Pr[Y = 0] + (\frac{1}{12} - \frac{j - 1}{2j(j + 2)}) \Pr[Y = 1],
\end{align*}
where the inequality follows from $\Pr[Y = 0] \geq \Pr[Y = i]$ for $i \geq 2$. To prove $\frac{t_2}{2} - \frac{t_j}{j} \geq 0$, it suffices to prove that $\frac{1}{3} - \frac{1}{1 + j} - \frac{1}{j} \sum_{i = 2}^{j - 1} \big( \frac{1}{1 + i} - \frac{1}{1 + j} \big)  \geq 0$, and $\frac{1}{12} - \frac{j - 1}{2j(j + 2)} \geq 0$. It is easy to verify the latter for $j \geq 3$. The former can be proved as 
\begin{align*}
& \frac{1}{3} - \frac{1}{1 + j} - \frac{1}{j} \sum_{i = 2}^{j - 1} \big( \frac{1}{1 + i} - \frac{1}{1 + j} \big) \\ 
=& 
\frac{1}{3} + \frac{j - 2}{j(1+j)} - \big( \frac{1}{1 + j}  + \frac{1}{j} \sum_{i = 2}^{j - 1}  \frac{1}{1 + i} \big) \\
\geq & 
\frac{1}{3} + \frac{j - 2}{j(1+j)} - \big( \frac{1}{1 + j}  + \frac{j - 2}{3j} \big) \\
= & \big( \frac{1}{3} - \frac{j - 2}{3j} \big) + \big( \frac{j - 2}{j(1+j)} - \frac{1}{1 + j} \big) \\
= & \frac{2}{3j} - \frac{2}{j(1 + j)} \geq 0,
\end{align*}
where the first inequality follows from $\frac{1}{1+i} \leq \frac{1}{3}$ for $i \geq 2$ and the last inequality follows from $j \geq 3$. 
\end{proof}

We prove the theorem by considering the following two cases.
\paragraph*{Case 1: $2\Pr[Y=0] \geq \Pr[Y = 1]$ or $2\Pr[Z=0] \geq \Pr[Z = 1]$. } 
Without loss of generality, assume that $2\Pr[Y=0] \geq \Pr[Y = 1]$. It is equivalent to 
\begin{align*}
& \Pr[Y = 0] \geq \frac{2}{3} \Pr[Y = 0] + \frac{1}{6} \Pr[Y = 1] \\
\Leftrightarrow \quad & t_1 \geq \frac{t_2}{2}.
\end{align*}
By Lemma~\ref{lem:gain}, it implies that $t_1 \geq \frac{t_j}{j}$ for all $j \geq 2$. Then, since $\E[Z] = \sum_{j = 1}^m j \cdot \Pr[Z = j] = 1$, 
\begin{align*}
\E\bigg[\frac{1}{1 + \max(Y, Z)}\bigg]
&= \sum_{i = 0}^m \Pr[Y = i] \cdot \frac{1}{1 + i} - \sum_{j = 1}^m \Pr[Z = j] t_j \\
&\geq \sum_{i = 0}^m \Pr[Y = i] \cdot \frac{1}{1 + i} - t_1 \sum_{j = 1}^m j \cdot \Pr[Z = j]  \\
&= \sum_{i = 0}^m \Pr[Y = i] \cdot \frac{1}{1 + i} - t_1 \\
&= \E[\frac{1}{1 + \max(Y, 1)}].
\end{align*}

The following lemma proves the theorem in the case $t_1 \geq \frac{t_2}{2}$. 
\begin{lem}
$\E[\frac{1}{1 + \max(Y, 1)}] \geq \E[\frac{1}{1 + \max(Y_U, 1)}]$. 
\label{lem:one_side}
\end{lem}
\begin{proof}
Note that $Y = Y_1 + \dots + Y_m$, and each $Y_i$ is a Bernoulli random variable. Let $y_i := \E[Y_i]$. Without loss of generality, assume $y_1 \geq \dots \geq y_m$. We will show that if $y_1 > y_m$, 
\begin{equation}
\frac{\partial \E[\frac{1}{1 + \max(Y, 1)}]}{\partial y_m} - \frac{\partial \E[\frac{1}{1 + \max(Y, 1)}]}{\partial y_1} \leq 0.
\label{eq:mass1}
\end{equation} This implies that as long as $y_1 > y_m$, decreasing $y_1$ and increasing $y_m$ by the same amount will never increase $\E[\frac{1}{1 + \max(Y, 1)}]$ while maintaining $y_1 + \dots + y_m =1$, so the expectation is minimized when $y_1 = \dots = y_m$, or $Y = Y_U$. Consider the following decomposition of $\E[\frac{1}{1 + \max(X, Y)}]$. 
\begin{align*}
\E_Y\bigg[\frac{1}{1 + \max(1, Y)}\bigg] &= 
\Pr[Y = 0]\cdot \frac{1}{2} + \sum_{i = 1}^m \Pr[Y = i]\cdot \frac{1}{1 + i} \\
&= \frac{1}{2} (1 - \sum_{i = 1}^m \Pr[Y = i])  + \sum_{i = 1}^m \Pr[Y = i]\cdot \frac{1}{1 + i} \\
&= \frac{1}{2} - \sum_{i = 2}^m \Pr[Y = i]\cdot (\frac{1}{2} - \frac{1}{1 + i}) \\
&= \frac{1}{2} - \sum_{i = 2}^m \Pr[Y \geq i]\cdot (\frac{1}{i} - \frac{1}{1 + i}).
\end{align*}
To prove~\eqref{eq:mass1}, it suffices to prove that for all $i \geq 2$, 
\begin{equation*}
\frac{\partial \Pr[Y \geq i]}{\partial y_m} - \frac{\partial \Pr[Y \geq i]}{\partial y_1} \geq 0.
\end{equation*}
Let $Y' = Y_2 + \dots + Y_{m - 1}$, and fix $i \geq 3$. 
\begin{align*}
\Pr[Y \geq i] &= \Pr[Y' = i - 2] y_1 y_m + \Pr[Y' = i - 1] \big( y_1(1 - y_m) + (1 - y_1)y_m + y_1 y_m) \\
 &\quad + \Pr[Y' \geq i] \\
\frac{\partial \Pr[Y \geq i]}{\partial y_1} &= \Pr[Y' = i - 2] y_m + \Pr[Y' = i - 1] \big( 1 - y_m \big)
\end{align*}
Therefore, 
\begin{align*}
\frac{\partial \Pr[Y \geq i]}{\partial y_m} - 
\frac{\partial \Pr[Y \geq i]}{\partial y_1} &= \Pr[Y' = i - 2] (y_1 - y_m) + \Pr[Y' = i - 1] \big( y_m - y_1 \big) \\
&= (y_1 - y_m) \big( \Pr[Y' = i - 2]  + \Pr[Y' = i - 1] \big).
\end{align*}
Finally, it remains to show that $\Pr[Y' = j] \geq \Pr[Y' = j + 1]$ for all $j \geq 0$. The case $j = 0$ is true since $\Pr[Y' = 0] = \prod_{k = 2}^{m - 1} (1 - y_k)$ and 
\[
\Pr[Y' = 1] = \sum_{k = 2}^{m - 1} \Pr[Y' = 0] \cdot \frac{y_k}{1 - y_k} 
\leq \sum_{k = 2}^{m - 1} \Pr[Y' = 0] \frac{y_k}{1 - y_2} 
= \frac{\Pr[Y' = 0]}{1 - y_2} \sum_{i=2}^{m-1} y_k \leq \Pr[Y' = 0],
\]
where the last line follows from $\sum_{k=2}^{m - 1} y_i \leq 1 - y_1 \leq 1 - y_2$ since $y_1$ is the biggest element. The case $j \geq 1$ follows from the fact the sequence $( \Pr[Y' = j] )_j$ has one mode or two consecutive modes, and at least one of them occurs at $j = 0$ ($\E[Y'] < 1$ implies $\Pr[Y' = 0] > \Pr[Y' = j]$ for all $j \geq 2$).
\end{proof}

\paragraph*{Case 2: $2\Pr[Y=0] \leq \Pr[Y = 1]$ and $2\Pr[Z=0] \leq \Pr[Z = 1]$.} 
Since $\sum_{i=0}^m \Pr[Z = i] = 1$ and $\E[Z] = \sum_{i=1}^m i \cdot \Pr[Z = i] = 1$, we have $\Pr[Z = 0] = \sum_{i = 2}^m (i - 1) \Pr[Z = i]$. Together with the fact $2\Pr[Z = 0] \leq \Pr[Z = 1]$, it implies 
\begin{align*}
& 1 - \Pr[Z = 1] = \Pr[Z = 0] + \sum_{i=2}^m \Pr[Z = i] 
\leq 2 \Pr[Z = 0] < \Pr[Z = 1],
\end{align*}
so $\Pr[Z = 1] \geq \frac{1}{2}$. Finally,
\begin{align*}
\E\bigg[\frac{1}{1 + \max(Y, Z)}\bigg]
&= \sum_{i = 0}^m \Pr[Y = i] \cdot \frac{1}{1 + i} - \sum_{j = 1}^m \Pr[Z = j]\cdot t_j \\
&= \sum_{i = 0}^m \Pr[Y = i] \cdot \frac{1}{1 + i} - \Pr[Z = 1]\cdot t_1 - \sum_{j = 2}^m \Pr[Z = j] \cdot t_j \\
&\geq \sum_{i = 0}^m \Pr[Y = i] \cdot \frac{1}{1 + i} - \Pr[Z = 1]\cdot t_1 - \sum_{j = 2}^m j \cdot \Pr[Z = j] \cdot \frac{t_2}{2} \\
&= \sum_{i = 0}^m \Pr[Y = i] \cdot \frac{1}{1 + i} - \Pr[Z = 1]\cdot t_1 - \frac{t_2}{2} (1 - \Pr[Z = 1]) \\
&\geq \sum_{i = 0}^m \Pr[Y = i] \cdot \frac{1}{1 + i} - \frac{t_1}{2} - \frac{t_2}{4}
= \E\bigg[\frac{1}{1 + \max(Y, Y_{H})}\bigg], 
\end{align*}
where $Y_H$ is drawn from $\Bino(2, \frac{1}{2})$.
The first inequality follows from Lemma~\ref{lem:gain}, 
and the second inequality follows from $\Pr[Z = 1] \geq 0.5$ and $t_1 \leq \frac{t_2}{2}$. 

Since $Y_{H}$ satisfies $2\Pr[Y_{H} = 0] = \Pr[Y_{H} = 1]$, the analysis for Case 1 shows that $\E[\frac{1}{1 + \max(Y, Y_{H})}] \geq \E[\frac{1}{1 + \max(1, Y_U)}]$. 
\end{proof}
The following lemma finishes the proof of Theorem~\ref{thm:bipartite_weighted}.
\begin{lem}
For any $m \in \mathbb{N}$, if $Y \sim \Bino(m, \frac{1}{m})$, 
\[
\E\bigg[\frac{1}{1 + \max(1, Y)}\bigg] \geq 0.4481
\]
\end{lem}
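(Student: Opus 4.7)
The approach is to compute $f(m) := \E[1/(1+\max(1,Y))]$ in closed form for $Y \sim \Bino(m, 1/m)$, identify its limit as $m \to \infty$ (which equals $1 - 3/(2e) \approx 0.4482$), and verify $f(m)$ stays above this limit for every finite $m$. The stated bound $0.4481$ is slightly weaker, so the tight inequality $f(m) \geq 1 - 3/(2e)$ suffices.

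First, I combine the standard identity $\E[1/(1+Y)] = (1-(1-p)^{m+1})/((m+1)p)$ for $Y \sim \Bino(m,p)$ (proved by rewriting $\binom{m}{k}/(k+1) = \binom{m+1}{k+1}/(m+1)$) evaluated at $p = 1/m$ with the decomposition $\E[1/(1+\max(1,Y))] = \E[1/(1+Y)] - \tfrac{1}{2}\Pr[Y=0]$ and $\Pr[Y = 0] = (1-1/m)^m$. A short algebraic simplification gives
\[
f(m) = \frac{2m - (3m-1)(1-1/m)^m}{2(m+1)}.
\]
Since $(1-1/m)^m \nearrow 1/e$, I obtain $\lim_{m \to \infty} f(m) = 1 - 3/(2e)$; the same value emerges directly from $\E[1/(1+\max(1,Y_\infty))] = 1/(2e) + (1/e)(e-2) = 1 - 3/(2e)$ for $Y_\infty \sim \mathsf{Poisson}(1)$.

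Next, clearing denominators reduces the target $f(m) \geq 1 - 3/(2e)$ to the clean algebraic inequality
\[
(3m-1)\bigl(1 - e(1-1/m)^m\bigr) \geq 2e - 4.
\]
For large $m$ I chain the bound $e(1-1/m)^m \leq e^{-1/(2m)}$ (which follows from $m\ln(1-1/m) \leq -1 - 1/(2m)$ by dropping positive tail terms of the Mercator series) with the Taylor inequality $e^{-x} \leq 1 - x + x^2/2$ to obtain $e(1-1/m)^m \leq 1 - 1/(2m) + 1/(8m^2)$. The target then reduces to $(3m-1)(4m-1)/(8m^2) \geq 2e-4$, i.e., $(44 - 16e)m^2 \geq 7m - 1$, and since $44 - 16e \approx 0.508 > 0$, this quadratic inequality is satisfied for all $m \geq 14$. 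For the thirteen residual cases $m \in \{1, 2, \ldots, 13\}$, I verify the original inequality $(3m-1)(1 - e(1-1/m)^m) \geq 2e-4$ by directly substituting the explicit value of $(1-1/m)^m$ and computing; each case gives a value comfortably above $2e-4 \approx 1.437$.

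\textbf{Main obstacle.} The principal difficulty is that the elementary Taylor estimate is too slack for small $m$, where $e(1-1/m)^m$ lies well below its asymptote $1$ and the simple bound $1 - 1/(2m) + 1/(8m^2)$ loses too much, forcing a case analysis. Retaining further terms from $m \ln(1 - 1/m) = -1 - 1/(2m) - 1/(3m^2) - \cdots$ could tighten the bound and shrink the case list, but the resulting algebra grows quickly, so dispatching the small cases by direct arithmetic appears to be the cleanest route.
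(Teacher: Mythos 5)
Your proof is correct, and it is in fact \emph{more} rigorous than the paper's own argument. The paper simply writes ``Since the binomial distribution is approximated by the Poisson distribution in the limit, we use this to ease the calculation,'' substitutes $Y \sim \textsf{Poisson}(1)$, and computes the value $1 - \tfrac{3}{2e} \approx 0.4482$. What the paper never establishes is that the binomial expectation at finite $m$ actually dominates this Poisson limit --- but the lemma asserts the bound for \emph{every} $m \in \mathbb{N}$, which is exactly what the surrounding argument (via Theorem~\ref{thm:oneside} applied to $m = n-1$ for finite $n$) requires. So the paper's proof contains a genuine gap, which you fill.

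Your route does this cleanly: the exact formula
\[
f(m) = \frac{2m - (3m-1)(1-1/m)^m}{2(m+1)}
\]
(obtained from the standard identity $\E[1/(1+Y)] = (1-(1-p)^{m+1})/((m+1)p)$ together with the observation that $\tfrac{1}{1+\max(1,Y)}$ and $\tfrac{1}{1+Y}$ differ only on the event $\{Y=0\}$, by exactly $\tfrac12$) is correct (I spot-checked $m=2,3$), and the reduction of $f(m) \geq 1 - 3/(2e)$ to $(3m-1)(1-e(1-1/m)^m) \geq 2e-4$ is exact. Your two-step bound $e(1-1/m)^m \leq e^{-1/(2m)} \leq 1 - \tfrac{1}{2m} + \tfrac{1}{8m^2}$ is valid (the first from dropping positive tail terms of $-m\ln(1-1/m) = 1 + \tfrac{1}{2m} + \cdots$, the second from the alternating-series bound, valid since $\tfrac{1}{2m} \leq 1$), and the resulting quadratic $(44-16e)m^2 \geq 7m-1$ does hold for $m\geq 14$ (at $m=14$ the left side of the original reduced inequality is $2255/1568 \approx 1.4381 > 2e-4 \approx 1.4366$, a tight but valid margin). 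The thirteen small cases are finite arithmetic checks. In short: same target constant $1 - 3/(2e)$ as the paper, but you actually prove the finite-$m$ inequality the lemma states, which the paper hand-waves through.
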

\begin{proof}
    Since the binomial distribution is approximated by the Poisson distribution in the limit,
    we use this to ease the calculation.  Let $Y \sim \textsf{Poisson}(1)$.
    \begin{align*}
        \E\bigg[\frac{1}{1+\max(1,Y)}\bigg] &= \sum_{k=2}^{\infty}  \frac{1}{k+1} \Pr[Y'=k] + \frac{1}{2}\Pr[Y<2] \\
        &= \sum_{k=2}^{\infty} \frac{1}{k+1} \frac{1}{k! \cdot e}  + \frac{1}{2} (\frac{1}{e} + \frac{1}{e}) \\
        &=  \frac{1}{e} \big(\sum_{k=0}^{\infty} \frac{1}{k!} -1-1-\frac{1}{2} \big)  + \frac{1}{2} (\frac{1}{e} + \frac{1}{e}) \\
        &= (e-\frac{5}{2})\frac{1}{e}  + \frac{1}{e}  \\
        &\geq 0.4481
    \end{align*}
\end{proof}

\section {General Graphs}
\label{sec:general_weighted}
In this section, we prove Theorem~\ref{thm:general_weighted} for weighted general graphs. Our proof methods here closely follow that of Lemma 4.9 of Chekuri et al.~\cite{CVZ14} that lower bounds the correlation gap for monotone submodular functions by $1 - 1/e$. 
The only difference is that Lemma~\ref{lem:continuous} holds for matching with a weaker guarantee (if $\nu$ was a monotone submodular function, Lemma~\ref{lem:continuous} would hold with $2\nu(G)$ replaced by $\nu(G)$).

\begin{proof}
Fix weights $w \in (\RR^+ \cup \{ 0 \})^E$. 
Define $F : [0, 1] \to (\RR^+ \cup \{ 0 \})$ as
$F(x) := \E_{G \sim \caldg_{n,w, x}} [\nu(G)]$. 
Now, fix $x \in [0, 1]^E$ in the matching polytope. 
We will show $F(x) \geq 0.43 \sum_{e \in E} w_e x_e$. 

Consider the function $\phi(t) := F(tx)$ for $t \in [0, 1]$.
\begin{equation}
\frac{d \phi}{dt} = x \cdot \nabla F (tx) = \sum_{e \in E} x_e \frac{\partial F}{\partial x_e} \bigg|_{tx}
\label{eq:phi}
\end{equation}
For each $e \in E$, 
\begin{align*}
\frac{\partial F}{\partial x_e} \bigg|_{tx} 
&= \frac{\partial \E_{G \sim \caldg_{n,w, tx}} [\nu(G)]}{\partial x_e} \bigg|_{tx} \\
&= \E_{G \sim \caldg_{n,w, tx}} [\nu(G) | e \in G] - \E_{G \sim \caldg_{n,w, tx}} [\nu(G) | e \notin G] \\
&= \E_{G \sim \caldg_{n,w, tx}} [\nu(G \cup \{ e \})  -\nu(G \setminus \{ e \})],
\end{align*}
where $G \cup \{ e \}$ (resp. $G \setminus \{ e \}$) denotes the graph $(V, E(G) \cup \{ e \})$ (resp. $(V, E(G) \setminus \{ e \})$. 
\begin{lem}
For any fixed graph $G$ with weights $\{ w_e \}$ and any point $x$ in the matching polytope, 
\[
\sum_{e \in E} x_e \big( \nu(G \cup \{ e \})  -\nu(G \setminus \{ e \}) \big) + 2 \nu(G) \geq \sum_{e \in E} x_e w_e.
\]
\label{lem:continuous}
\end{lem}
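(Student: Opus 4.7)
The plan is to rewrite the claim in the equivalent form
\[
\sum_{e \in E} x_e \bigl(w_e - \Delta_e\bigr) \;\leq\; 2\,\nu(G),
\]
where $\Delta_e := \nu(G \cup \{e\}) - \nu(G \setminus \{e\})$, and then to exhibit a nonnegative vertex function $y : V \to \RR_{\geq 0}$ with $\sum_v y_v = 2\nu(G)$ satisfying the pointwise bound $w_e - \Delta_e \leq y_u + y_v$ for every potential edge $e = (u,v) \in E$. Once both ingredients are in place, the vertex constraints $\sum_{e \ni v} x_e \leq 1$ of the matching polytope (the only constraints used, matching the footnote after Theorem~\ref{thm:general_weighted}) give
\[
\sum_{e \in E} x_e(w_e - \Delta_e) \;\leq\; \sum_{e=(u,v)} x_e(y_u + y_v) \;=\; \sum_{v \in V} y_v \sum_{e \ni v} x_e \;\leq\; \sum_v y_v \;=\; 2\nu(G).
\]

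To construct $y$, I fix an optimal weighted matching $M^*$ of $G$ with total weight $\nu(G)$, let $M^*(v)$ denote the edge of $M^*$ incident to $v$ when it exists, and set $y_v := w_{M^*(v)}$ if $v$ is matched by $M^*$ and $y_v := 0$ otherwise. Each edge $f \in M^*$ then contributes its weight to exactly two coordinates of $y$, so $\sum_v y_v = 2\sum_{f \in M^*} w_f = 2\nu(G)$, which is the promised budget.

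The pointwise bound is proved by a single swap that simultaneously handles both cases of whether $e$ lies in $G$. Given $e = (u,v) \in E$, I consider
\[
M' \;:=\; \bigl(M^* \setminus \{M^*(u), M^*(v)\}\bigr) \cup \{e\}.
\]
Removing every edge of $M^*$ meeting $\{u,v\}$ frees those two vertices, so $M'$ is a valid matching; this remains correct in the degenerate sub-cases where $u$ or $v$ is unmatched in $M^*$ or where $e \in M^*$ (then $M^*(u)=M^*(v)=e$). Counting weights yields $w(M') = \nu(G) + w_e - y_u - y_v$. If $e \in G$, then $M' \subseteq G$, so optimality of $M^*$ forces $w_e \leq y_u + y_v$, and since $\Delta_e \geq 0$ this gives $w_e - \Delta_e \leq y_u + y_v$. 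If $e \notin G$, then $M'$ is a matching of $G \cup \{e\}$ while $\nu(G \setminus \{e\}) = \nu(G)$, so $\Delta_e = \nu(G \cup \{e\}) - \nu(G) \geq w_e - y_u - y_v$, giving the same bound.

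The only conceptual subtlety is recognizing that the same swap $M^* \mapsto M'$ proves the inequality in both cases; everything else is LP-duality style bookkeeping. The vertex function $y$ behaves like a doubled vertex cover, and its $\ell_1$ weight $2\nu(G)$, rather than the $\nu(G)$ one would get in the bipartite setting from K\"{o}nig--Egerv\'{a}ry, is precisely the factor that, when fed into the Chekuri et al.\ ODE sketched just above the lemma, degrades the bound $1 - 1/e$ to the $(e^2-1)/(2e^2)$ of Theorem~\ref{thm:general_weighted}.
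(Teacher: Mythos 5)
Your proof is correct and is essentially the paper's: both arguments fix a maximum-weight matching $M^*$ of $G$, distribute the budget $2\nu(G)$ to the coefficients of the $x_e$ via exactly the vertex weights $y_v = w_{M^*(v)}$ (the paper writes this as $\sum_{f \in M^*,\, f \sim e} w_f$, which equals your $y_u + y_v$), and establish the pointwise bound by the same exchange $M^* \mapsto (M^* \setminus \{M^*(u), M^*(v)\}) \cup \{e\}$ --- you just unify the paper's three cases (by how many edges of $M^*$ meet $e$) into one swap, split instead on whether $e \in G$, and phrase the budget as a doubled fractional vertex cover. One small imprecision to note: when $e \in M^*$ the swap returns $M^*$ itself, so the claimed identity $w(M') = \nu(G) + w_e - y_u - y_v$ only holds as $w(M') \geq \nu(G) + w_e - y_u - y_v$, but that is the direction you actually use, so the conclusion is unaffected.
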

\begin{proof}
Let $M \subseteq E(G)$ be a maximum weight matching of $G$. 
Note that 
\begin{align}
& \sum_{e \in E} x_e \big( \nu(G \cup \{ e \})  -\nu(G \setminus \{ e \}) \big) + 2 \nu(G) \nonumber \\
\geq& \sum_{e \in E} x_e \big( \nu(G \cup \{ e \})  -\nu(G) \big) + 2 \sum_{f \in M} w_f \nonumber \\
\geq& \sum_{e \in E} x_e \big( \nu(G \cup \{ e \})  -\nu(G) \big) + \sum_{f \in M} \sum_{e \in E: e \sim f} x_e w_f
\label{eq:continuous}
\end{align}
where $f \sim e$ indicates that two edges $f$ and $e$ share an endpoint.
To prove the lemma, it suffices to show that for each $e \in E$, the coefficient of of $x_e$ in~\eqref{eq:continuous} is at least $w_e$. We consider the following cases.

\begin{itemize}
\item If $M \cup \{ e \}$ is a matching, $\nu(G \cup \{ e \}) \geq \nu(G) + w_e$ and $\nu(G \setminus \{ e \}) \leq \nu(G)$, so $\nu(G \cup \{ e \})  -\nu(G \setminus \{ e \}) \geq w_e$. 

\item If $e$ intersects exactly one edge $f \in M$, the coefficient of $x_e$ is 
$\nu(G \cup \{ e \})  -\nu(G) + w_f$. If $w_f \geq w_e$, it is at least $w_e$. If $w_f < w_e$, $M \cup \{ e \} \setminus \{ f \}$ is a matching of weight $\nu(G) + w_e - w_f$. 
It implies that $e \notin E(G)$ and $\nu(G \cup \{ e \}) - \nu(G) \geq w_e - w_f$, so 
$\nu(G \cup \{ e \})  -\nu(G) + w_f \geq w_e$. 

\item If $e$ intersects two edges $f, g \in M$, the coefficient of $x_e$ is 
$\nu(G \cup \{ e \})  -\nu(G) + w_f + w_g$. If $w_f + w_g \geq w_e$, it is at least $w_e$. If $w_f + w_g < w_e$, $M \cup \{ e \} \setminus \{ f, g \}$ is a matching of weight $\nu(G) + w_e - w_f - w_g$. 
It implies that $e \notin E(G)$ and $\nu(G \cup \{ e \}) - \nu(G) \geq w_e - w_f - w_g$, so 
$\nu(G \cup \{ e \})  -\nu(G) + w_f + w_g \geq w_e$. 
\end{itemize}
\end{proof}
Combining~\eqref{eq:phi} and Lemma~\ref{lem:continuous}, 
\begin{align*}
\frac{d \phi}{dt} &= \sum_{e \in E} x_e \frac{\partial F}{\partial x_e} \bigg|_{tx} \\
&=\sum_{e \in E} \E_{G \sim \caldg_{n,w, tx}}[\nu(G \cup e) - \nu(G \setminus e)] \\
&\geq \sum_{e \in E} x_e w_e - 2 \E_{G \sim \caldg_{n,w, tx}}[\nu(G)] \\
&= \sum_{e \in E} x_e w_e - 2 \phi(t).
\end{align*}
which implies that, 
\[
\frac{d}{dt} (e^{2t} \phi(t)) = 2 e^{2t} \phi(t) + e^{2t} \frac{d\phi}{dt} \geq e^{2t} \sum_{e \in E} x_e w_e.
\]
Since $\phi(0) = 0$, 
\[
e^2 \phi(1) \geq \sum_{e \in E} x_e w_e \int_{0}^1 e^{2t} dt = \frac{e^2 - 1}{2} \sum_{e \in E} x_e w_e,
\]
which proves the theorem.
\end{proof}

\bibliographystyle{abbrv}
\bibliography{mybib-randommatching}

\end{document}